\newcommand\tikznode[3][]%
\tikzset{
	-|-/.style={
		to path={
			(\tikztostart) -| ($(\tikztostart)!#1!(\tikztotarget)$) |- (\tikztotarget)
			\tikztonodes
		}
	},
	-|-/.default=0.5,
	|-|/.style={
		to path={
			(\tikztostart) |- ($(\tikztostart)!#1!(\tikztotarget)$) -| (\tikztotarget)
			\tikztonodes
		}
	},
	|-|/.default=0.5,
}
\tikzstyle{block} = [draw=black,fill=white,rectangle,thick,minimum height=2em,minimum width=4em, align=center]
\tikzstyle{sum} = [draw,circle,inner sep=0mm,minimum size=2mm]
\tikzstyle{connector} = [->,thick]
\tikzstyle{dashedconnector} = [->,thick,dashed]
\tikzstyle{line} = [thick]
\tikzstyle{dashedline} = [thick,dashed]
\tikzstyle{branch} = [circle,inner sep=0pt,minimum size=1mm,fill=black,draw=black]
\tikzstyle{guide} = []
\tikzstyle{snakeline} = [connector, decorate, decoration={pre length=0.2cm,
\newcommand{\gettikzxy}[3]{%
	\tikz@scan@one@point\pgfutil@firstofone#1\relax
	\edef#2{\the\pgf@x}%
	\edef#3{\the\pgf@y}%
}
		\pgfmathsetlength\pgfutil@tempdima{\pgfkeysvalueof{/pgf/parallelepiped
				offset x}}
		\pgfmathsetlength\pgfutil@tempdimb{\pgfkeysvalueof{/pgf/parallelepiped
				offset y}}
		\def\ppd@offset{\pgfpoint{\pgfutil@tempdima}{\pgfutil@tempdimb}}
\tikzstyle{server}=[
\tikzstyle{ports}=[
\pgfplotsset{compat=newest}
\pgfplotsset{plot coordinates/math parser=false}
\newlength\figureheight
\newlength\figurewidth 
\newcommand{\R}{\mathbb{R}} 
\newcommand{\N}{\mathbb{N}}
\newcommand{\E}{\mathbb{E}}
\newcommand{\Abc}{\boldsymbol{\mathcal{A}}}
\newcommand{\AbcTilde}{\boldsymbol{\mathcal{\tilde{A}}}}
\newcommand{\Bbc}{\boldsymbol{\mathcal{B}}}
\newcommand{\Cbc}{\boldsymbol{\mathcal{C}}}
\newcommand{\Dbc}{\boldsymbol{\mathcal{D}}}
\newcommand{\Ic}{\mathcal{I}}
\newcommand{\modq}{\bmod{q}}
\renewcommand{\mod}[1]{\,(\mathrm{mod}\,#1)}
\newcommand{\ct}{\mathsf{ct}}
\newcommand{\sk}{\mathsf{sk}}
\newcommand{\Enc}{\mathsf{Enc}}
\newcommand{\pk}{\mathsf{pk}}
\newcommand{\Dec}{\mathsf{Dec}}
\newcommand{\RS}{\mathsf{RS}}
\newcommand{\Mult}{\mathsf{Mult}}
\newcommand{\round}[1]{\left\lfloor#1\right\rceil}
\newcommand{\roundtext}[1]{\lfloor#1\rceil}
\newcommand{\innerproduct}[2]{\left\langle #1, #2 \right\rangle}
\definecolor{istblue}{rgb/cmyk}{0.0,0.25490,0.56862745/1,0.55,0.0,0.43}%
\definecolor{istgreen}{rgb/cmyk}{0.388235,0.83137,0.4431372/0.53,0,0.46,0.16}%
\definecolor{istorange}{rgb/cmyk}{1.0,0.5333,0.0667/0.0,0.46,0.93,0.0}%
\definecolor{istred}{rgb/cmyk}{0.9961,0.2902,0.2863/0.0,0.7,0.71,0.0}%
\definecolor{istlightblue}{rgb/cmyk}{0.3765, 0.6863, 1.0/0.62,0.31,0.0}%
\definecolor{istdarkblue}{rgb/cmyk}{0.1176,0.1804,0.8706/0.86,0.79,0.0,0.12}%
\definecolor{istdarkgreen}{rgb/cmyk}{0.0627,0.5882,0.2824/0.89,0.0,0.52,0.41}%
\definecolor{istdarkred}{rgb/cmyk}{0.529,0.031,0.075/0,0.94,0.86,0.47}%
\definecolor{istlogoblue}{rgb/cmyk/gray}{0,0,0.804/1,1,0,0.2/0}%
\definecolor{unianthrazit}{rgb/cmyk}{0.24314,0.26667,0.29804/0.5,0.35,0.25,0.70}%
\definecolor{unimiddleblue}{rgb/cmyk}{0,0.31765,0.61961/1,0.7,0,0}%
\definecolor{unilightblue}{rgb/cmyk}{0,0.74510,1/0.7,0,0,0}%
\newcommand{\diagdots}[3][-25]{%
	\rotatebox{#1}{\makebox[0pt]{\makebox[#2]{\xleaders\hbox{$\cdot$\hskip#3}\hfill\kern0pt}}}%
}
\newtheorem{assumption}{Assumption}
\newtheorem{definition}{Definition}
\newtheorem{theorem}{Theorem}
\newtheorem{lemma}{Lemma}
\newtheorem{problem}{Problem}
\title{\LARGE \bf
Bootstrapping Guarantees: Stability and Performance Analysis\\
for Dynamic Encrypted Control
}
\author{Sebastian Schlor and Frank Allgöwer%
\thanks{F.\ Allgöwer is thankful that this work was funded by the Deutsche Forschungsgemeinschaft (DFG, German Research Foundation) under Germany's Excellence Strategy -- EXC 2075 -- 390740016 and within grant AL 316/13-2 -- 285825138 and AL 316/15-1 -- 468094890. 
S.\ Schlor thanks the Graduate Academy of the SC SimTech for its support.}%
\thanks{ S.\ Schlor and F.\ Allgöwer are with the University of Stuttgart, Institute for Systems Theory and Automatic Control, Germany.  {\tt\small \{schlor,allgower\}@ist.uni-stuttgart.de}.}%
}
\begin{document}

\maketitle
\thispagestyle{empty}
\pagestyle{empty}

\begin{abstract}

Encrypted dynamic controllers that operate for an unlimited time have been a challenging subject of research.
The fundamental difficulty is the accumulation of errors and scaling factors in the internal state during operation.
Bootstrapping, a technique commonly employed in fully homomorphic cryptosystems, can be used to avoid overflows in the controller state but can potentially introduce significant numerical errors. 
In this paper, we analyze dynamic encrypted control with explicit consideration of bootstrapping.
By recognizing the bootstrapping errors occurring in the controller's state as an uncertainty in the robust control framework, we can provide stability and performance guarantees for the whole encrypted control system.
Further, the conservatism of the stability and performance test is reduced by using a lifted version of the control system.

\end{abstract}

\section{INTRODUCTION}

Encrypted control offers the ability to outsource the computations to evaluate the control law to external servers while maintaining the privacy of the involved data at the same time. This not only enhances flexibility and reduces the need for dedicated hardware for the controller, but also can be valuable for control and monitoring of distributed systems.
From a cryptographic point of view, the technology that enables end-to-end encrypted computations is homomorphic encryption. 
A simple example for this concept is, e.g., the Paillier cryptosystem~\cite{Paillier1999}, which is \emph{additively homomorphic}, i.e., there exists an operation $\oplus$ such that
\begin{align}
	\Dec(\Enc(x_1)\oplus\Enc(x_2)) = x_1+x_2.
\end{align}
Here, $\Enc$ and $\Dec$ correspond to the encryption and decryption process, respectively.
\emph{Leveled homomorphic} arithmetic cryptosystems allow for the evaluation of any fixed-degree polynomial. 
They become \emph{fully homomorphic} if they extend to arbitrary many operations.

Modern homomorphic cryptosystems for arithmetic operations on real numbers are approximate, i.e., the result of the computation is not exact but contains a small error. This enhances security and can be mitigated by appropriate scaling and rounding of the decrypted result.
Furthermore, cryptosystems typically only support integers as inputs, which means that every real-number input has to be quantized and represented as a fixed-point number with integer representation and an corresponding scaling factor.

While this works well for predetermined computations, encrypted control raises additional challenges.
Depending on the application, the results should be available in real-time.
Further, in dynamic control, recursive computations are performed over a possibly infinite time-horizon, which exceeds preset degrees for the supported polynomials. 
This blows up the introduced errors so that the decryption of the result is no longer valid. Ever-growing scaling factors, especially through multiplications, lead to undetected overflows in the results.

\subsection{Related work}

To mitigate these effects, researchers from the control community have come up with different measures. 
When the recursive computations are about to exceed the predefined number of multiplications, sending the encrypted controller state to the plant and receiving a freshly rounded and re-encrypted state from the plant, as used in~\cite{Stobbe2022}, can reset the error and the scaling factor. However, this adds more computational burden to the plant, and additional communication is needed.
In contrast,~\cite{Murguia2020} proposed a periodic reset of the controller state to a known encryption of zero. This solves the problem of overflow and adds no further communication and computation; however, a transient phase of convergence is introduced repeatedly, which can lead to unsatisfactory control behavior.
A similar line of thought lead to the proposal of~\cite{Schluter2021}, where instead of periodic resets of the whole state, an FIR filter-type controller was used. There, the output of the FIR filter always contains only a constant number of factors, which can be computed with a predefined number of multiplications.
To avoid scaling factors entirely,~\cite{Kim2021,Kim2022,Lee2023a} proposed to transform non-integer system matrices into integer matrices using pole-placement by feedback with the plant. The integer reformulation, however, usually leads to 
the cloud-implemented controller being unstable~\cite{Schluter2022}. Merely the feedback stabilizes the controller dynamics. Additionally, re-encryption and communication by the actuator is necessary.

From a cryptographer's point of view, the first solution to arbitrary many computations on encrypted data was proposed by~\cite{gentry2009fully} introducing so-called bootstrapping. The main idea of this procedure is to evaluate the decryption process, which subsequently enables refreshing the ciphertext, in an encrypted fashion. For an overview of recent progress, we refer to~\cite{Badawi2023,Marcolla2022a}.
In more modern fully homomorphic cryptosystems, such as CKKS\cite{Cheon2017}, which we build on in this paper, the core of bootstrapping corresponds to a so-called modular reduction, which entails a polynomial approximation of the modulo function. This is one of the sources of numerical accuracy loss. We give a quick introduction in Section~\ref{subs:Bootstrapping}. 
Recent research on bootstrapping has focused on different advanced polynomial approximations of the modulo function, mainly with respect to minimizing the point-wise maximal error~\cite{Jutla2020,Lee2020} or its $\ell_2$-norm~\cite{Lee2022}. 
While this works well for general purpose applications, only a relative error measure can ensure asymptotic stability for encrypted control.
In~\cite{Kim2016}, bootstrapping was used for encrypted control, but only timing aspects were considered.

\subsection{Contribution}
This is the first work to incorporate bootstrapping in the analysis of encrypted control. In particular, we make the following contributions:
\begin{itemize}
	\item For encrypted dynamic control, we explicitly take the bootstrapping errors into account.
	\item By considering these errors as static sector nonlinearities in the control loop, we provide a stability and performance test for the encrypted control system using the robust control framework.
	\item By lifting the system dynamics,
	we further reduce the conservatism of the stability and performance guarantees.
	\item With our framework, we unify the analysis of encrypted control approaches with bootstrapping, periodic state reset, and FIR filters.
\end{itemize}
Our analysis shows that encrypted control has different requirements on the bootstrapping than general purpose homomorphic encryption. Using tailored bootstrapping polynomails for encrypted control can reduce the computation time and enhance the control performance.

\section{Preliminaries}

In this section, we present the most relevant properties of cryptosystems for the focus of the paper. 

\subsection{Notation}
For $q\in\N$, we define the centered modulo operator by $m \modq=m-q\roundtext{m/q}$, with the rounding operator $\roundtext{\cdot}$.
By $\innerproduct{a}{s}$ we denote the inner product of the vectors $a$ and $s$.
By $0$ and $I$ we we denote the zero and identity matrix of matching dimensions.
By $\succ 0$ ($\succeq 0$) we denote positive {(semi\nobreakdash-)} definiteness.
We abbreviate terms by $(\star)$ if they can be obtained by symmetry.

\subsection{CKKS}

One of the most advanced fully homomorphic cryptosystems supporting arithmetic operations on approximate real numbers is the Cheon-Kim-Kim-Song (CKKS) scheme~\cite{Cheon2017}, which is a special type of Learning With Errors (LWE) cryptosystem~\cite{Regev2009}.
Here, we only introduce a simplified, abstract version thereof to highlight the most important functionalities for encrypted control.

Before encrypting and manipulating encrypted numbers, every number or vector of numbers has to be encoded as a polynomial in a polynomial ring. 
For the sake of clarity, we  will not go into more detail here. For this paper, we treat them as if they were numbers or vectors of reals or integers, i.e., we can add, multiply, and round them and they have an inner product.

For decryption and encryption, a secret and a public key are created. The secret key is denoted by $\sk = (1,s)$, where $s$ is $n$-dimensional and sampled from a predefined distribution. For a public modulus $q$, the public key is generated as $\pk = (b,a)$ with $b=-\innerproduct{a}{s}+e\bmod{q}$, $a$ of equal dimension and random, and $e$ a random scalar.
The encryption of a secret number $m$ is then generated as $\Enc_{\pk}(m) = (m,0) + \pk \bmod{q}= (-\innerproduct{a}{s}+m+e\bmod{q}, a) \eqcolon \ct_{q}(m)$. 
Decryption works by taking the inner product $\Dec_\sk(\ct_{q}(m))=\innerproduct{\ct_{q}(m)}{\sk}\bmod{q} = -\innerproduct{a}{s}+m+e + \innerproduct{a}{s} \bmod{q}= m+e\bmod{q}\approx m$ if $e$ is small and $m\in(-\tfrac{q}{2}, \tfrac{q}{2}]$.
Further, CKKS has operations for addition and multiplication with public and secret factors implemented. For more details, we refer to the literature~\cite{Marcolla2022a}.

\subsection{Integer representation and rescaling}\label{subs:Rescaling}

Every number that is encrypted in CKKS has to be represented by an integer first, before the encoding and encryption.
This can be done by appropriate scaling and rounding.
Suppose we have a number $m_\mathrm{real}\in \R$. We can obtain an approximate scaled and rounded integer representation as $m = \round{c m_\mathrm{real}}$ with some scaling factor $c$, which has to be remembered.
In the setup, a chain of moduli of the ciphertext $q_\ell = q_0c^\ell \in\{q_0,\dots, q, \dots, Q, \dots, q_L\}$ with levels $\ell$ is selected. 
Now suppose we have two numbers $m_1$ and $m_2$ 
represented in this form with scaling factors $c$. Then, the product $\Mult(\ct_{q_{\ell}}(m_1),\ct_{q_{\ell}}(m_2)) = \ct_{q_{\ell}}(m_3)$ 
contains a scaling factor $c^2$ to be represented correctly.
Thus, coined to dynamic encrypted control, if the controller's system matrix contains non-integer values, the scaling factor of the controller state is ever-growing. This leads to overflows.
To reduce the digits and scaling factor again, a rescaling operation $\RS(\ct_{q_{\ell}}(m_3)) = \ct_{q_{\ell-1}}(m_3) = \round{\frac{1}{c}\ct_{q_{\ell}}(m_3)}\bmod{q_{\ell-1}}$ is implemented. This reduces the scaling factor back to $c$ at the cost of reducing the modulus. Since numbers cannot be represented correctly anymore if the modulus gets too small, no more rescaling is possible at some point.

\subsection{Bootstrapping}\label{subs:Bootstrapping}
The bootstrapping operation is needed to raise the modulus again and to enable further computations.
For that, we first interpret the ciphertext $\ct_{q}(m)$ with small modulus $q$ as if it was given in a larger modulus $Q$.
The decrypted value would result in $\Dec_\sk(\ct_{q \text{~(but assume $Q$)}}(m)) = (-\innerproduct{a}{s}+m+e \modq) + \innerproduct{a}{s} = (-\innerproduct{a}{s}+m+e -rq) + \innerproduct{a}{s} = m+e -rq \bmod{Q} \approx m-rq$ with the number of overflows in the small modulus $r = \roundtext{\tfrac{-\innerproduct{a}{s}+m+e}{q}}$.
However, this number $r$ is unknown due to the encryption.
To remove the rescaling error $rq$ and get a correct representation of $\ct_{Q}(m)$, the essence of bootstrapping is evaluating $m\modq \mod{Q}$ homomorphically.

However, since the modulo function is non-polynomial, and the homomorphisms of the cryptosystem only allow addition and multiplication, a polynomial approximation has to be evaluated.
In the literature, scaled versions of Chebyshev polynomials~\cite{Cheon2018a} and Taylor and Chebyshev approximations~\cite{Chen2019}, as well as polynomials minimizing the point-wise maximal error~\cite{Jutla2020,Lee2020} or its $\ell_2$ norm~\cite{Lee2022} were proposed.
Here, we take polynomials with a relative error description and show their benefits for encrypted control.

\section{Bootstrapping polynomial}\label{sec:BootError}

For the analysis of the errors introduced by bootstrapping with an approximating polynomial we make the following assumptions.
\begin{assumption}~\\[-1.1\baselineskip]
	\begin{enumerate}
		\item The encrypted values $|m+e|\leq\epsilon\frac{q}{2}<\frac{q}{2}$ are bounded by $\epsilon\in(0,1)$ relative to the maximal representable number. 
		\item The number of overflows $|r|\leq K$ is bounded by $K\in\N$. 
	\end{enumerate}
\end{assumption}

To satisfy the first assumption, the modulus $q$ is set up large enough, and the controller design has to ensure stability or, more precisely, invariance.
As the wrap-arounds are caused by the inner product $\innerproduct{a}{s}$ with bounded and random $a$ and $s$, $r$ is drawn from an Irwin-Hall distribution.
Due to its finite support, the second assumption is also satisfied.

From this observation, we see that the modulo function only has to be accurately approximated in intervals of width $\epsilon\frac{q}{2}$ around multiples of $q$ up to $\pm Kq$, i.e., on the set $\Ic=\{m-rq\,:\,|m|\leq \epsilon\frac{q}{2}, r\in\{-K,\ldots,K\}\}$.

Let $p:\R\to\R, p(m)= \sum_{i=0}^d \tilde{p}_i m^i$ of fixed degree $d$ and with coefficient vector $\tilde{p}$ be the polynomial that approximates the modulo function.
Then, the error between the polynomial and the modulo function can be measured in different ways.
For general purpose homomorphic encryption, $|p(m)-(m\modq)|\leq\gamma ~\forall m\in\Ic$ bounds the maximal point-wise error by $\gamma$ (c.f.\ \cite{Jutla2020,Lee2020}). This is useful to give absolute worst-case error bounds.
The expected variance of the error can be described by $\E_\Ic(|p(m)-(m\modq)|^2)$ (c.f.\ \cite{Lee2022}).

In contrast to the literature on general homomorphic encryption, we use a relative error measure $|p(m)-(m\modq)|\leq\gamma\,|m\modq| ~\forall m\in\Ic$, which is necessary to ensure stability of the encrypted control system.
A polynomial satisfying this kind of error bound can be generated by solving an interpolation problem, e.g., Hermite interpolation with conditions on the value of $p(m)$ and its derivative $p^\prime(m)$ at $m = rq, r\in\{-K,\dots,K\}$, or interpolation over sampled points.
A more rigorous alternative would be to use an optimization-based approach by sum-of-squares optimization. To this end, the optimal parameters can be obtained by
\begin{align}\label{eq:minimaxPoly}
	\tilde{p}^\star  =  \arg\min_{\tilde{p},\gamma} & \quad\hspace*{-0.5em} \gamma\\
	\text{s.t.} & \quad\hspace*{-0.5em} -\gamma m \leq m-p(m-rq) \leq \gamma m  &\hspace*{-0.5em}\forall m\in [-\epsilon\frac{q}{2},\epsilon\frac{q}{2}],\notag\\
	 &&\hspace*{-0.5em}{}\phantom{\forall }r\in\{-K,\ldots,K\}. \notag
\end{align}
Here, we demonstrate the concept by a optimized polynomial with degree $d=25$, the upper bound $K=2$, and the number range $\epsilon = 0.5$, which leads to a polynomial that satisfies the sector bound with $\gamma = 0.2296$. The polynomial is depicted next to the modulo function in Fig.~\ref{fig:modPoly}. In Fig.~\ref{fig:sector}, the error between the modulo function and the polynomial is shown over $m\modq$ instead of $m$.

There is a natural trade-off between precision, i.e., a small $\gamma$, and the evaluation complexity in terms of the degree $d$ of the polynomial. 
Thus, if the control system can cope with larger bootstrapping errors, a smaller-degree polynomial with quicker evaluation time can be used, which helps in real-time applications.

\begin{figure}[t]
	\centering
\setlength\figurewidth{0.8\columnwidth}
\setlength\figureheight{0.5\columnwidth}
\input{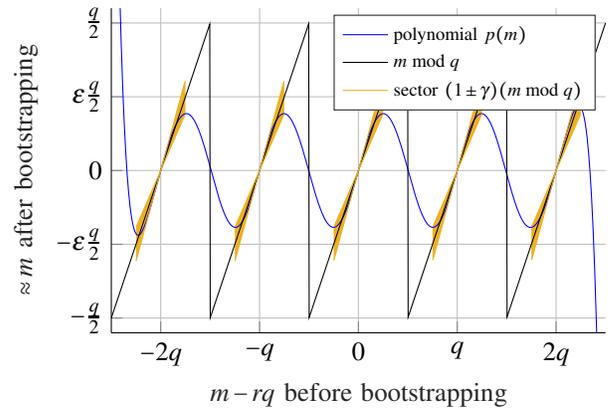} 
	\caption{The modulo function and its polynomial approximation for bootstrapping.}\label{fig:modPoly}
\end{figure}
\begin{figure}[t]
\centering
\setlength\figurewidth{0.8\columnwidth}
\setlength\figureheight{0.5\columnwidth}
\input{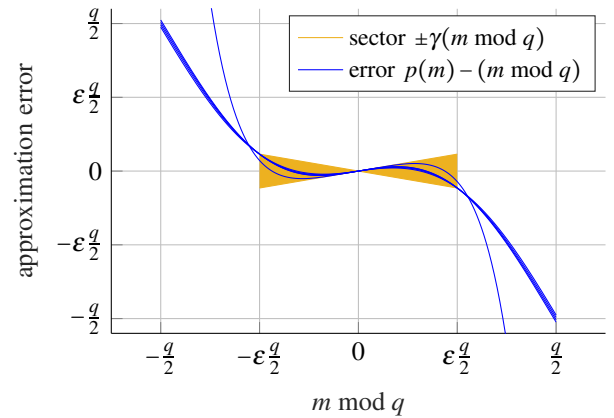} 
	\caption{Relative error of the polynomial approximation to the modulo function for bootstrapping.
	The figure shows multiple error functions since the bootstrapping polynomial in Fig.~\ref{fig:modPoly} is evaluated at different intervals depending on the offset $rq$.
	}\label{fig:sector}
\end{figure}

\section{Problem description and formulation as robust control problem}\label{sec:setup}

\begin{figure}[t]
	
	\centering
	\begin{tikzpicture}[scale=1, auto, >=stealth']
		\def\blockHeight{2\baselineskip}
		\node[block, minimum height=\blockHeight]  at (0,0)  (plant) {\parbox{8em}{\hfill $x$\\}};
		\node[]  at (plant)  (plantText) {\centering Plant};
		\node[block, minimum height=\blockHeight, below = 1\baselineskip of plant] (controller) {\parbox{8em}{\hfill $x_c$\\}};
		\node[]  at (controller)  (controllerText) {\centering Controller};
		\node[block, minimum height=\blockHeight, below = 1\baselineskip of controller] (sector) {\parbox{8em}{\hfill
			 }};
		\node[]  at (sector)  (sectorText) {\centering Bootstrapping};
		
		\node[guide, left=0.5cm of plant] (guideLeft) {};
		\node[guide, right=0.5cm of plant] (guideRight) {};
		\node[guide, left=2cm of plant] (guideLeftLeft) {};
		\node[guide, right=2cm of plant] (guideRightRight) {};
		
		\draw[connector]  ($(guideLeftLeft)+(0,+0.6\baselineskip)$) -- node[pos=0.1]{$w_{p_1}$} ($(plant.west)+(0,+0.6\baselineskip)$);
		\draw[connector]  ($(plant.east)+(0,+0.6\baselineskip)$) -- node[pos=0.9]{$z_p$}  ($(guideRightRight)+(0,+0.6\baselineskip)$);
		
		\draw[connector]  ($(controller.west)+(0,+0.6\baselineskip)$) -| node[pos=0.75,left]{$u$} ($(plant.west)+(-0.5cm,-0.6\baselineskip)$) --  ($(plant.west)+(0,-0.6\baselineskip)$);
		\draw[connector]  ($(plant.east)+(0,-0.6\baselineskip)$) --  ($(plant.east)+(0.5cm,-0.6\baselineskip)$) |-  node[pos=0.25,right]{$y$} ($(controller.east)+(0,+0.6\baselineskip)$);
		
		\draw[connector]  (sector.east) -| node[pos=0.75,right]{$w_u$} ($(controller.east)+(0.5cm,-0.6\baselineskip)$) -- ($(controller.east)+(0,-0.6\baselineskip)$);
		\draw[connector]  ($(controller.west)+(0,-0.6\baselineskip)$) --  ($(controller.west)+(-0.5cm,-0.6\baselineskip)$) |- node[pos=0.25,left]{$z_u$} (sector.west);
		
		\draw[connector]  ($(controller.east)+(2cm,0)$) 
		-- 
		node[pos=0.1,above]{$w_{p_2}$} (controller.east);
		
	\end{tikzpicture}
	\caption{Block diagram of the encrypted control system with bootstrapping. The bootstrapping is is interpreted as static nonlinearity acting on the controller state.}\label{fig:blocks}
	
\end{figure}
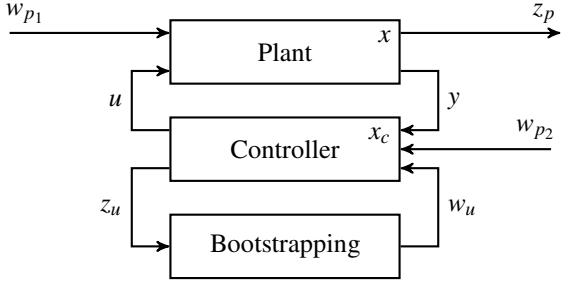

We introduce the involved system and controller before we cover the bootstrapping error within the robust control framework.

\subsection{System description}

We consider the discrete-time, linear, time-invariant system
\begin{align}
	x(t+1) &= A x(t) + Bu(t) + B_1 w_{p_1}(t)\\
	y(t) &= C x(t) + F_1 w_{p_1}(t)\\
	z_p(t) &= C_1x(t) + Eu(t) + D_1w_{p_1}(t)
\end{align}
with time-index $t\in\N$, initial condition $x(0)=x_0$, control input $u$, performance input $w_{p_1}$, measurement output $y$, and performance output $z_p$.

For this system, we consider a dynamic output feedback controller 
\begin{align}
	x_c(t+1) &= A_c x_c(t) + B_c y(t) + B_2w_{p_2}(t) + A_c w_u(t)\\
	u(t) &= C_c x_c(t) + D_c y(t) + F_2 w_{p_2}(t)\\
	z_u(t) &= x_c(t)
\end{align}
with initial state $x_c(0)=x_{c,0}$, performance input $w_{p_2}$, uncertainty input $w_u$, and uncertainty output $z_u$.
This controller emerges from a standard controller, e.g., LQG, extended by the uncertainty channel from $z_u$ to $w_u$.
This uncertainty channel will be used to incorporate the bootstrapping errors affecting the controller state into the analysis.
Further, introducing an additional performance input $w_{p_2}$ provides the opportunity to study the influence of other errors due to quantization and the cryptosystem's noise $e$. 

The interconnection of plant and controller with joint state $\xi = \begin{pmatrix}
x \\ x_c
\end{pmatrix}$ and joint performance input $w_p = \begin{pmatrix}
w_{p_1}\\ w_{p_2}
\end{pmatrix}$ results in the closed-loop system representation
\begin{align}\label{eq:clsys}
	\left(\begin{array}{c} 
		\xi(t+1) \\ \hline  z_p(t) \\ \hline z_u(t)
	\end{array}\right)
	&=
	\left(\begin{array}{c | c | c} 
		\Abc 	& \Bbc_p & \Bbc_u \\ 
		\hline  
		\Cbc_p & \Dbc_{pp} & \Dbc_{pu}\\ 
		\hline 
		\Cbc_u & \Dbc_{up} & \Dbc_{uu}
	\end{array}\right)
	\left(\begin{array}{c} 
		\xi(t)\\ \hline w_{p}(t) \\ \hline w_u(t)
	\end{array}\right)
\end{align}
with
\begin{multline}
	\left(\begin{array}{c | c | c} 
		\Abc 	& \Bbc_p & \Bbc_u \\ 
		\hline  
		\Cbc_p & \Dbc_{pp} & \Dbc_{pu}\\ 
		\hline 
		\Cbc_u & \Dbc_{up} & \Dbc_{uu}
	\end{array}\right)
	= \\
	\left(\begin{array}{cc | c c | c} 
		A + B D_c C  & B C_c 	& B_1 + BD_cF_1 & BF_2 		& 0 \\ 
		B_c C & A_c 				&  B_cF_1 & B_2 					& A_c \\ 
		\hline  
		C_1 + ED_cC & EC_c 		& D_1 + ED_cF_1 & EF_2 		& 0 \\ 
		\hline 
		0 & I 							& 0 & 0 									& 0
	\end{array}\right).
\end{multline}

\subsection{Bootstrapping error}

From a robust control viewpoint, the approximation error between the polynomial and the actual modulo function can be described by a static, time-varying uncertainty $\Delta_r$ acting on the channel from $z_u= \xi$ to $w_u = \Delta_r(z_u) = p(\xi)-(\xi \modq)$ and depending on the unknown number of overflows $r$.
By $p(\xi)$, we denote the component-wise evaluation of the polynomial.
By our choice of bootstrapping polynomial in Section~\ref{sec:BootError} with its relative error bounds, we can cover every possible bootstrapping uncertainty $\Delta_r$ as an element of the set $\boldsymbol{\Delta}$ of all uncertainties satisfying the same relative error bound.

At every time step, the input and the error of the bootstrapping satisfy the sector condition
\begin{align}\label{eq:sector}
	(w_{u,i}+\gamma z_{u,i})(\gamma z_{u,i} - w_{u,i})\geq0 
\end{align}
in every component $i$.

More generally, a sector condition of the form
\begin{align}
	\left(\Delta(z_u)-L_\ell z_u\right)^\top\left(L_u z_u - \Delta(z_u)\right) &\geq 0
\end{align}
with upper and lower bounds $L_u$ and $L_\ell$, respectively, can equivalently be described by
\begin{align}\label{eq:sectorP}
	\begin{pmatrix} 
		\Delta(z_u) \\ z_u
	\end{pmatrix}^\top
	\tau P_u
	\begin{pmatrix} 
		\Delta(z_u) \\ z_u
	\end{pmatrix} &\geq 0
\end{align}
with any $\tau>0$ and the multiplier
\begin{align}
	P_u = 
	\begin{pmatrix} 
		-2I & L_\ell + L_u \\
		L_\ell^\top + L_u^\top & -L_\ell^\top L_u - L_u^\top L_\ell
	\end{pmatrix}.
\end{align}

For the component-wise bootstrapping error, we get $L_u = \gamma I$, $L_\ell = -\gamma I$, and
\begin{align}\label{eq:P}
	P_u = 
	\begin{pmatrix} 
		-2I & 0 \\
		0 & 2 \gamma^2 I
	\end{pmatrix}.
\end{align}
Thus, the set $\boldsymbol{\Delta}$ of sector bounded uncertainties $\Delta$, containing the bootstrapping uncertainties, is described by~\eqref{eq:sectorP} with the multiplier~\eqref{eq:P}.

\subsection{Problem description}
We consider performance as in the following definition.
\begin{definition}
The system satisfies quadratic performance specified by $P_p$ if it is asymptotically stable for $w_p=0$ and there exists $\epsilon>0$ such that
\begin{align}
	\sum_{t=0}^{\infty}
	\begin{pmatrix} 
		w_p(t) \\ z_p(t)
	\end{pmatrix}^\top
	P_p
	\begin{pmatrix} 
		w_p(t) \\ z_p(t)
	\end{pmatrix} &\leq -\epsilon \sum_{t=0}^{\infty}w_p(t)^\top w_p(t)
\end{align}
for $\xi(0)=0$ and all $w_p\in\ell_2$. 
\end{definition}

This includes the $\ell_2$-gain, among other common performance specifications.

\begin{problem}
	For the closed-loop system~\eqref{eq:clsys} under the influence of bootstrapping with any possible error according to~\eqref{eq:sector}, we want to find a test for quadratic performance specified by $P_p$. 
\end{problem}

\section{Dynamic control with bootstrapping}

In this main section, we first show how we can test performance using the system and bootstrapping description from Section~\ref{sec:setup}. Then, we use a lifting approach to reduce the conservatism of our analysis.

\subsection{Stability and performance analysis}

As described in Section~\ref{sec:setup}, we consider the bootstrapping error as a time-varying, unknown, static uncertainty described by the sector condition~\eqref{eq:sectorP}.
Clearly, the zero element resembling no error, i.e., no bootstrapping, is also contained in the uncertainty description $\boldsymbol{\Delta}$.
Then, by applying robust control theory~\cite[Thm.\ 10.4.]{Scherer2000}, we can obtain the following theorem.

\begin{theorem}\label{thm:1}
	The encrypted closed-loop system~\eqref{eq:clsys} with the bootstrapping uncertainty~\eqref{eq:sectorP} satisfies robust quadratic performance with performance index $P_p = 	
	\begin{pmatrix} 
		Q_p & S_p\\
		S_p^\top & R_p
		\end{pmatrix} $ with $R_p \succeq 0$, if 
there exist $X \succ 0$ and $\tau>0$ such that
	\begin{align}
		(\star)^\top  
		\begin{pmatrix} 
			-X & 0\\
			0 & X
			\end{pmatrix} 
			\begin{pmatrix} 
			I & 0 & 0\\
			\Abc & \Bbc_p & \Bbc_u
			\end{pmatrix}  &{}\\
		+
		(\star)^\top  
		P_p
		\begin{pmatrix} 
			0 & I & 0\\
			\Cbc_p & \Dbc_{pp} & \Dbc_{pu}
			\end{pmatrix}  &{}\\
		+
		(\star)^\top  
		\tau P_u
		\begin{pmatrix} 
		0 & 0 & I\\
		\Cbc_u & \Dbc_{up} & \Dbc_{uu}
		\end{pmatrix}  &{}
		\prec 0.
	\end{align}
	\vspace*{-0.8\baselineskip}
\end{theorem}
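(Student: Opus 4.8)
The plan is to prove the theorem by the dissipativity argument underlying the full-block S-procedure, specializing \cite[Thm.~10.4]{Scherer2000} to the data of \eqref{eq:clsys}. Given feasible $X\succ0$ and $\tau>0$, take the quadratic storage function $V(\xi)=\xi^{\top}X\xi$, left- and right-multiply the LMI by $\begin{pmatrix}\xi^{\top}&w_p^{\top}&w_u^{\top}\end{pmatrix}$ and its transpose, and substitute the closed-loop identities $\xi(t+1)=\Abc\xi+\Bbc_p w_p+\Bbc_u w_u$, $z_p=\Cbc_p\xi+\Dbc_{pp}w_p+\Dbc_{pu}w_u$, $z_u=\Cbc_u\xi+\Dbc_{up}w_p+\Dbc_{uu}w_u$ from \eqref{eq:clsys}. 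The three quadratic forms in the LMI then collapse to $V(\xi(t+1))-V(\xi(t))$, the performance supply rate with multiplier $P_p$, and $\tau$ times the uncertainty supply rate with multiplier $P_u$, so that strictness of the LMI yields, for some $\epsilon>0$ and along any solution of \eqref{eq:clsys},
\begin{align*}
V(\xi(t+1))-V(\xi(t))+\begin{pmatrix}w_p(t)\\ z_p(t)\end{pmatrix}^{\top}P_p\begin{pmatrix}w_p(t)\\ z_p(t)\end{pmatrix}+\tau\begin{pmatrix}w_u(t)\\ z_u(t)\end{pmatrix}^{\top}P_u\begin{pmatrix}w_u(t)\\ z_u(t)\end{pmatrix}\le-\epsilon\bigl(\norm{\xi(t)}^{2}+\norm{w_p(t)}^{2}+\norm{w_u(t)}^{2}\bigr).
\end{align*}

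Next, fix any admissible bootstrapping uncertainty, i.e., restrict to signals with $w_u(t)=\Delta(z_u(t))$, $\Delta\in\boldsymbol{\Delta}$. Then \eqref{eq:sector} holds componentwise, which by \eqref{eq:sectorP}--\eqref{eq:P} is exactly $\begin{pmatrix}w_u(t)\\ z_u(t)\end{pmatrix}^{\top}P_u\begin{pmatrix}w_u(t)\\ z_u(t)\end{pmatrix}\ge0$; moreover $\Dbc_{uu}=0$, so the interconnection carries no algebraic loop and is well posed. For robust asymptotic stability put $w_p=0$ and drop from the dissipation inequality the nonnegative terms $z_p(t)^{\top}R_p z_p(t)$ (using $R_p\succeq0$) and the $\tau P_u$ term; what remains is $V(\xi(t+1))-V(\xi(t))\le-\epsilon\norm{\xi(t)}^{2}$, so $V$ (positive since $X\succ0$) is a Lyapunov function and $\xi(t)\to0$. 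For the performance bound put $\xi(0)=0$, sum the dissipation inequality over $t=0,\dots,N$, telescope the storage terms to $V(\xi(N+1))-V(\xi(0))=V(\xi(N+1))\ge0$, and again drop the nonnegative $\tau P_u$ terms; this gives $\sum_{t=0}^{N}\begin{pmatrix}w_p\\ z_p\end{pmatrix}^{\top}P_p\begin{pmatrix}w_p\\ z_p\end{pmatrix}\le-\epsilon\sum_{t=0}^{N}\norm{w_p(t)}^{2}$, and letting $N\to\infty$ reproduces the inequality in the definition of quadratic performance for this $\Delta$; since $\Delta\in\boldsymbol{\Delta}$ was arbitrary, robustness follows.

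The bookkeeping above is routine; the one step that needs genuine care is the passage $N\to\infty$, namely establishing that the interconnection of \eqref{eq:clsys} with any $\Delta\in\boldsymbol{\Delta}$ is not merely well posed but $\ell_2$-stable, so that $w_p\in\ell_2$ forces $\xi,z_p,w_u,z_u\in\ell_2$ and every series converges. This is precisely what the strictness of the LMI together with $R_p\succeq0$ buys, and it is the content packaged in \cite[Thm.~10.4]{Scherer2000}; in the present setting it is in fact immediate, since \eqref{eq:sector} is equivalent to $|w_{u,i}(t)|\le\gamma\,|z_{u,i}(t)|$, so $\boldsymbol{\Delta}$ consists of memoryless maps of gain at most $\gamma$ and a small-gain argument applies. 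The only remaining point is that the single constant multiplier $\tau P_u$ covers all of $\boldsymbol{\Delta}$, which holds because every $\Delta_r$ obeys the same sector bound at every time step, so a static multiplier suffices and no dynamic IQC is needed. Combining the stability and performance conclusions yields robust quadratic performance with index $P_p$.
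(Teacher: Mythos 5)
Your proof is correct and follows essentially the same route as the paper, which simply invokes \cite[Thm.~10.4]{Scherer2000} after observing that the sector description~\eqref{eq:sectorP} fits that theorem's multiplier framework; you have merely unpacked the standard dissipativity/S-procedure argument behind that citation (storage function $\xi^\top X\xi$, dropping the nonnegative $\tau P_u$ term along admissible uncertainties, using $R_p\succeq 0$ for stability, and telescoping for performance). The details you supply, including the well-posedness remark via $\Dbc_{uu}=0$ and the $\ell_2$-summability step, are all consistent with the cited result.
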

\begin{proof}
	The proof follows directly from our uncertainty description for bootstrapping~\eqref{eq:sectorP}, which suits the robust stability and performance test in~\cite[Thm.\ 10.4.]{Scherer2000}.
\end{proof}

\subsection{Lifted dynamics}
The stability and performance test in Theorem~\ref{thm:1} is conservative for the actual encrypted system in the sense that it remains valid even if the bootstrapping error is introduced in every time step.
In reality, however, we know that we need to perform bootstrapping only every $T_{\mathrm{BS}}\in \N$ time steps. This number results from the fact that the cryptosystem supports several multiplications, i.e., control updates until the modulus has to be raised again.

Inspired by the success of lifting in~\cite{Seidel2023}, we adapt the lifting idea from~\cite{Chen1995} to obtain a system description only sampled at time instants $t=kT_{\mathrm{BS}}$ with time-index $k$ of the lifted system.

	We lift the involved signals as
\vspace*{-0.3\baselineskip}
\begin{align*}
	\tilde{z}_u(k) &= z_u(kT_{\mathrm{BS}}),  \quad
	\tilde{w}_u(k) = w_u(kT_{\mathrm{BS}}), \quad
	\tilde{\xi}(k) = \xi(kT_{\mathrm{BS}}),
	\\
	\tilde{z}_p(k) &=
	\begin{pmatrix}
		z_p(kT_{\mathrm{BS}})\\
		\vdots\\
		z_p(kT_{\mathrm{BS}}+(T_{\mathrm{BS}}-1))
	\end{pmatrix}, \\
	\tilde{w}_p(k) &=
	\begin{pmatrix}
		w_p(kT_{\mathrm{BS}})\\
		\vdots\\
		w_p(kT_{\mathrm{BS}}+(T_{\mathrm{BS}}-1))
	\end{pmatrix}.
\end{align*}

	\noindent The lifted system with time-index $k$ is represented by
		\begin{align}\label{eq:liftSys}
		\left(\begin{array}{c} 
			\tilde{\xi}(k+1) \\ \hline  \tilde{z}_p(k) \\ \hline \tilde{z}_u(k)
		\end{array}\right)
		&=
		\left(\begin{array}{c | c | c} 
			\AbcTilde
			 	& \tilde{\Bbc}_p & \tilde{\Bbc}_u \\ 
			\hline  
			\tilde{\Cbc}_p & \tilde{\Dbc}_{pp} & \tilde{\Dbc}_{pu}\\ 
			\hline 
			\Cbc_u & \tilde{\Dbc}_{up} & \Dbc_{uu}
		\end{array}\right)
		\left(\begin{array}{c} 
			\tilde{\xi}(k)\\ \hline \tilde{w}_{p}(k) \\ \hline \tilde{w}_u(k)
		\end{array}\right)
	\end{align}
	with
\bgroup \allowdisplaybreaks
\begin{align*}
	\AbcTilde &= \Abc^{T_{\mathrm{BS}}}, ~~
	\tilde{\Bbc}_u = \Abc^{T_{\mathrm{BS}}-1}\Bbc_u, ~~
	\tilde{\Bbc}_p = \begin{pmatrix}
		\Abc^{T_{\mathrm{BS}}-1}\Bbc_p &\dots & \Bbc_p
	\end{pmatrix},
	\\
	\tilde{\Cbc}_p &= \begin{pmatrix}
		\Cbc_p\\
		\vdots\\
		\Cbc_p \Abc^{T_{\mathrm{BS}}-1}
	\end{pmatrix}, \quad
	\tilde{\Dbc}_{pu} = \begin{pmatrix}
	0\\
	\Cbc_p \Bbc_{u} \\
	\vdots\\
	\Cbc_p \Abc^{T_{\mathrm{BS}}-2} \Bbc_u
	\end{pmatrix},
	\\
	\tilde{\Dbc}_{pp}  &= \begin{pmatrix}
		\Dbc_{pp} & 0 & \cdots & 0 \\
		\Cbc_p \Bbc_{p} & \multicolumn{2}{c}{\smash{\raisebox{-0.6\normalbaselineskip}{\diagdots[-25]{3.7\normalbaselineskip}{.5em}\hspace*{0.7em}}}} & \vdots \\
		\vdots & \multicolumn{1}{c}{\smash{\raisebox{-0.2\normalbaselineskip}{\diagdots[-25]{2.5\normalbaselineskip}{.5em}\hspace*{0em}}}} & \multicolumn{1}{c}{\smash{\raisebox{0.9\normalbaselineskip}{\hspace*{1.3em}\diagdots[-25]{2.5\normalbaselineskip}{.5em}}}} & 0 \\
		\Cbc_p \Abc^{T_{\mathrm{BS}}-2} \Bbc_{p} & \mathclap{\cdots} & \Cbc_p \Bbc_{p} & \Dbc_{pp}
	\end{pmatrix},\\
	\tilde{\Dbc}_{up} &= \begin{pmatrix}
		\Dbc_{up} & 0 & \cdots &0 
	\end{pmatrix}.
\end{align*}
\egroup
	
	With this equivalent system description, we can proceed with an improved stability and performance analysis.
	
	\begin{lemma}\label{lem:eq}
		The original system~\eqref{eq:clsys} satisfies quadratic performance specified by \begin{align}
			P_p = 	
		\begin{pmatrix} 
			Q_p & S_p\\
			S_p^\top & R_p
		\end{pmatrix},
	\end{align}
		if and only if the lifted system~\eqref{eq:liftSys} satisfies quadratic performance specified by \begin{align}
			\tilde{P}_p = 
			\begin{pmatrix} 
				I_{T_{\mathrm{BS}}}\otimes Q_p & I_{T_{\mathrm{BS}}}\otimes S_p\\
				I_{T_{\mathrm{BS}}}\otimes S_p^\top & I_{T_{\mathrm{BS}}}\otimes R_p
				\end{pmatrix} .
		\end{align}
		\vspace*{0\baselineskip}
	\end{lemma}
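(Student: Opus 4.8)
The plan is to establish the equivalence by showing that the two performance inequalities are, term by term, the same sum. The key observation is that the lifting is a bijective, causality-respecting repackaging of the signals: given any admissible trajectory $(\xi, w_p, z_p)$ of the original system~\eqref{eq:clsys} (with $w_u = 0$, since performance is defined for the nominal loop), the lifted signals $\tilde\xi, \tilde w_p, \tilde z_p$ defined above satisfy the lifted dynamics~\eqref{eq:liftSys}, and conversely every trajectory of~\eqref{eq:liftSys} unfolds to a unique trajectory of~\eqref{eq:clsys}. This correspondence is what the construction of $\AbcTilde, \tilde\Bbc_p, \tilde\Cbc_p, \tilde\Dbc_{pp}$ encodes: iterating~\eqref{eq:clsys} over $T_{\mathrm{BS}}$ steps and collecting the intermediate outputs produces exactly these block matrices. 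I would first state this correspondence as the main structural fact, verifying the matrix identities by direct expansion of the $T_{\mathrm{BS}}$-step state recursion (routine, so I would not belabor it).

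Next I would handle the quadratic forms. The crucial identity is
\begin{align*}
	\sum_{t=0}^{\infty}
	\begin{pmatrix} w_p(t) \\ z_p(t) \end{pmatrix}^\top
	P_p
	\begin{pmatrix} w_p(t) \\ z_p(t) \end{pmatrix}
	=
	\sum_{k=0}^{\infty}
	\begin{pmatrix} \tilde w_p(k) \\ \tilde z_p(k) \end{pmatrix}^\top
	\tilde P_p
	\begin{pmatrix} \tilde w_p(k) \\ \tilde z_p(k) \end{pmatrix},
\end{align*}
which holds because $\tilde P_p = I_{T_{\mathrm{BS}}} \otimes \bigl(\begin{smallmatrix} Q_p & S_p \\ S_p^\top & R_p \end{smallmatrix}\bigr)$ is block-diagonal with $T_{\mathrm{BS}}$ identical blocks, and the stacked vector $(\tilde w_p(k); \tilde z_p(k))$ contains exactly the $T_{\mathrm{BS}}$ pairs $(w_p(t); z_p(t))$ for $t = kT_{\mathrm{BS}}, \dots, kT_{\mathrm{BS}}+T_{\mathrm{BS}}-1$ (after a harmless permutation that groups the $w_p$-blocks together and the $z_p$-blocks together — I would note this reordering explicitly, since the lifted signals stack all $w_p$'s then all $z_p$'s, whereas the sum naturally interleaves them; the Kronecker structure makes the quadratic form invariant under this reordering). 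The same reasoning gives $\sum_t w_p(t)^\top w_p(t) = \sum_k \tilde w_p(k)^\top \tilde w_p(k)$. Hence the performance inequality with slack $\epsilon$ for the original system is literally the same inequality with the same $\epsilon$ for the lifted system.

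Finally I would address asymptotic stability of the nominal dynamics: $\Abc$ is Schur if and only if $\AbcTilde = \Abc^{T_{\mathrm{BS}}}$ is Schur, since the eigenvalues of $\Abc^{T_{\mathrm{BS}}}$ are the $T_{\mathrm{BS}}$-th powers of those of $\Abc$, and $|\lambda| < 1 \iff |\lambda^{T_{\mathrm{BS}}}| < 1$. Also, $\ell_2$-summability of $w_p$ over $t$ is equivalent to $\ell_2$-summability of $\tilde w_p$ over $k$ (finite blocks), so the admissible input classes match, and $\xi(0) = 0 \iff \tilde\xi(0) = 0$. Combining these three pieces gives the "if and only if". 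The main obstacle I anticipate is purely bookkeeping: getting the permutation between the interleaved ordering in the performance sum and the block-stacked ordering of the lifted signals exactly right, and confirming that $\tilde P_p$ as written (with all $Q_p$ blocks, then all $S_p$ blocks, etc.) is the correct conjugate of $I_{T_{\mathrm{BS}}} \otimes P_p$ under that permutation — once that is pinned down, the rest is immediate.
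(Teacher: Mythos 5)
Your proposal is correct and follows essentially the same route as the paper's proof, which rests on exactly the two summation identities you state (equality of the quadratic-form sums under $P_p$ versus $\tilde P_p$, and equality of the $\ell_2$-norms of $w_p$ and $\tilde w_p$). You simply spell out more of the bookkeeping the paper leaves implicit (the trajectory correspondence, the permutation between interleaved and block-stacked orderings, and the Schur equivalence of $\Abc$ and $\Abc^{T_{\mathrm{BS}}}$), all of which is sound.
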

	\begin{proof}
		Similar results were derived in~\cite{Linsenmayer2017} (stability), \cite{Seidel2023} ($\ell_2$-performance), and~\cite{Lang2024} (the general case). 
		The proof follows from
		\begin{align}
			\sum_{t=0}^{\infty}
			\begin{pmatrix} 
				w_p(t) \\ z_p(t)
			\end{pmatrix}^\top
			P_p
			\begin{pmatrix} 
				w_p(t) \\ z_p(t)
			\end{pmatrix}
			&\,=\,
			\sum_{k=0}^{\infty}
			\begin{pmatrix} 
				\tilde{w}_p(k) \\ \tilde{z}_p(k)
			\end{pmatrix}^\top
			\tilde{P}_p
			\begin{pmatrix} 
				\tilde{w}_p(k) \\ \tilde{z}_p(k)
			\end{pmatrix}
		\end{align}
		and 
		\begin{align}
			\sum_{t=0}^{\infty}w_p(t)^\top w_p(t)
			=
			\sum_{k=0}^{\infty}\tilde{w}_p(k)^\top \tilde{w}_p(k).
		\end{align}
	\end{proof}
	
	This performance equivalence can be used for our second theorem.
	
	\begin{theorem}\label{thm:2}
		The encrypted closed-loop system~\eqref{eq:clsys} with the bootstrapping uncertainty~\eqref{eq:sectorP} satisfies robust quadratic performance with performance index $P_p = 	
		\begin{pmatrix} 
			Q_p & S_p\\
			S_p^\top & R_p
		\end{pmatrix} $ with $R_p \succeq 0$, if 
		there exist $X \succ 0$ and $\tau>0$  such that
		\begin{align}
			(\star)^\top  
			\begin{pmatrix} 
				-X & 0\\
				0 & X
				\end{pmatrix} 
				\begin{pmatrix} 
				I & 0 & 0\\
				\AbcTilde & \tilde{\Bbc}_p & \tilde{\Bbc}_u
				\end{pmatrix}  &{}\\
			+
			(\star)^\top  
			\tilde{P}_p
			\begin{pmatrix} 
				0 & I & 0\\
				\tilde{\Cbc}_p & \tilde{\Dbc}_{pp} & \tilde{\Dbc}_{pu}
				\end{pmatrix}  &{}\\
			+
			(\star)^\top  
			\tau P_u
			\begin{pmatrix} 
				0 & 0 & I\\
				\Cbc_u & \tilde{\Dbc}_{up} & \Dbc_{uu}
				\end{pmatrix}  &{}
			\prec 0
		\end{align}
		with
		\begin{equation}
			\tilde{P}_p = 
			\begin{pmatrix} 
				I_{T_{\mathrm{BS}}}\otimes Q_p & I_{T_{\mathrm{BS}}}\otimes S_p\\
				I_{T_{\mathrm{BS}}}\otimes S_p^\top & I_{T_{\mathrm{BS}}}\otimes R_p
			\end{pmatrix} .
		\end{equation}
		\vspace*{-0.8\baselineskip}
	\end{theorem}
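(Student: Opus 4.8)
The plan is to reduce Theorem~\ref{thm:2} to Theorem~\ref{thm:1} applied to the lifted system~\eqref{eq:liftSys}, and then to transfer the conclusion back to the original system~\eqref{eq:clsys} by means of Lemma~\ref{lem:eq}. The lifted representation~\eqref{eq:liftSys} is obtained from~\eqref{eq:clsys} by iterating the state recursion $T_{\mathrm{BS}}$ times and stacking the intermediate performance signals, while keeping only a \emph{single} uncertainty channel $\tilde{z}_u(k)=z_u(kT_{\mathrm{BS}})\mapsto\tilde{w}_u(k)=w_u(kT_{\mathrm{BS}})$, which reflects that bootstrapping is carried out only at the instants $t=kT_{\mathrm{BS}}$. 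Since the sector condition~\eqref{eq:sector} holds component-wise at every time step, in particular at $t=kT_{\mathrm{BS}}$, the lifted uncertainty $\tilde{\Delta}$ still belongs to the set $\boldsymbol{\Delta}$ described by~\eqref{eq:sectorP} with the \emph{same} multiplier $P_u$ from~\eqref{eq:P}; moreover $\tilde{\Dbc}_{uu}=\Dbc_{uu}=0$, so the uncertainty loop is well posed and the hypotheses of the robust performance theorem are met.

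I would then apply the robust quadratic performance result~\cite[Thm.~10.4]{Scherer2000} --- exactly as in the proof of Theorem~\ref{thm:1}, but now to the lifted data $(\AbcTilde,\tilde{\Bbc}_p,\tilde{\Bbc}_u,\tilde{\Cbc}_p,\dots)$, the lifted performance index $\tilde{P}_p$, and the uncertainty multiplier $\tau P_u$. The required sign condition is inherited, since $R_p\succeq0$ gives $\tilde{R}_p=I_{T_{\mathrm{BS}}}\otimes R_p\succeq0$. The matrix inequality displayed in Theorem~\ref{thm:2} is precisely the linear matrix inequality certificate furnished by that theorem for the lifted system, so its feasibility with $X\succ0$ and $\tau>0$ implies that~\eqref{eq:liftSys} satisfies robust quadratic performance with index $\tilde{P}_p$ against every $\tilde{\Delta}\in\boldsymbol{\Delta}$, including robust asymptotic stability for $\tilde{w}_p=0$.

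Finally I would invoke Lemma~\ref{lem:eq} to convert robust quadratic performance of the lifted system with index $\tilde{P}_p$ into robust quadratic performance of the original system with index $P_p$, applied uncertainty by uncertainty over $\boldsymbol{\Delta}$. The key point is that lifting commutes with closing any fixed static bootstrapping uncertainty: with the loop $w_u=\Delta(z_u)$ closed, the $w_p\mapsto z_p$ trajectories of~\eqref{eq:clsys} coincide, after the signal reindexing, with those of~\eqref{eq:liftSys} with $\tilde{w}_u=\tilde{\Delta}(\tilde{z}_u)$ closed, and the two energy identities in the proof of Lemma~\ref{lem:eq} hold verbatim regardless of $\Delta$; likewise asymptotic stability for $w_p=0$ is equivalent because the closed $\AbcTilde$ is the $T_{\mathrm{BS}}$-step transition matrix of the closed $\Abc$. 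The step I expect to be the main obstacle is making this robust variant of Lemma~\ref{lem:eq} fully rigorous --- in particular checking that the correspondence between admissible uncertainties in the lifted and original pictures is exact (every bootstrapping uncertainty acting at the sampling instants in~\eqref{eq:clsys} induces an admissible $\tilde{\Delta}$ for~\eqref{eq:liftSys} and conversely), so that lifting does not tacitly reintroduce an uncertainty action at the non-sampling times $t\neq kT_{\mathrm{BS}}$. The remaining steps --- assembling the lifted matrices, the implication $R_p\succeq0\Rightarrow\tilde{R}_p\succeq0$, and well-posedness --- are routine.
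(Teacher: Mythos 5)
Your proposal is correct and follows essentially the same route as the paper, whose proof simply states that the result follows from Theorem~\ref{thm:1} applied to the lifted system together with Lemma~\ref{lem:eq}. Your additional care about the uncertainty correspondence between the lifted and original pictures (so that lifting does not reintroduce uncertainty action at non-sampling times) is a reasonable elaboration of a point the paper leaves implicit, but it does not change the argument.
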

	\begin{proof}
		The proof follows from Theorem~\ref{thm:1} and Lemma~\ref{lem:eq}.
	\end{proof}

Theorem~\ref{thm:2} yields a less conservative test than Theorem~\ref{thm:1} since its lifted system description~\eqref{eq:liftSys} captures the fact that the bootstrapping errors 
only occur
every $T_{\mathrm{BS}}$ time steps.

\subsection{Numerical example}
For the numerical evaluation we use the system 
with

\vspace*{-0.8\baselineskip}
\begin{alignat*}{4}
	A &= \begin{pmatrix}
		-0.5 & 0.1\\
		0& -0.2
	\end{pmatrix}, &~~
	B &= \begin{pmatrix}
		0\\
		1
	\end{pmatrix}, &~~
	B_1 &= \begin{pmatrix}
		1\\
		1
	\end{pmatrix}, &~~
	E &= \begin{pmatrix}
		1\\
		1
	\end{pmatrix},\\
	C &= \begin{pmatrix}
		1 & 0
	\end{pmatrix}, &~~
	F_1 &= 0, &~~
	C_1 &= I, &~
	D_1 &= 0,\\
	A_c &= \begin{pmatrix}
		0.13 & 0.1\\
		-1.27& 0.15
	\end{pmatrix}, &~~
	B_c &= \begin{pmatrix}
		0.63\\
		-0.27
	\end{pmatrix}, &~~
	B_2 &= I,\\
	C_c &= \begin{pmatrix}
		-1 & 0.35
	\end{pmatrix}, &~~
	D_c &= 0, &
	F_2 &=0.
\end{alignat*}
\noindent Using the polynomial from Section~\ref{sec:BootError} with $T_{\mathrm{BS}}=10$, an upper bound on the $\ell_2$-gain is found with $Q_p = -\gamma_{\ell_2}^2I,  S_p = 0$, and $R_p = I$. Theorem~\ref{thm:1} yields $\gamma_{\ell_2}=5.13$, and the less conservative Theorem~\ref{thm:2} returns $\gamma_{\ell_2}=3.97$. A simulation over 10.000 time steps yields an empirical lower bound of $\gamma_{\ell_2}=1.88$.

\section{Analysis of Reset and FIR controllers}
	It is interesting to note that if the sector slope is chosen as $\gamma = 1$ in the uncertainty description~\eqref{eq:sector}, then also reset controllers as in~\cite{Murguia2020} can be analyzed using Theorem~\ref{thm:2}. This is because this sector includes minus identity as part of its error description, and this is precisely the error introduced by resetting the entire controller state to zero.
	To pursue the analysis in this case, we can use a nominal controller with its matrices $A_c, B_c, C_c$, and $D_c$, and treat the reset purely by the uncertainty channel.

	Similarly, after small modifications, also FIR controllers of length $N$ such as in~\cite{Schluter2021} can be analyzed by Theorem~\ref{thm:1}. 
	Instead of resetting all states, the dropped measurement $y(t-N)$ is fed into the uncertainty channel as $z_u(t)$ at each time step. The uncertainty input yields $w_u(t) = -z_u(t)$, which can correspond to $\gamma=1$ in the sector notation. The last necessary modification is the change of $\Bbc_u=\begin{pmatrix}
		0\\ A_c^N B_c
	\end{pmatrix}$. 
	This cancels the effect that $y(t-N)$ would have had on the current controller state.
	Thus, if nominal dynamic controllers are used in a reset or FIR fashion, our theorems provide stability and performance tests.

\section{Summary and Outlook}

In this paper, we have shown how bootstrapping can be incorporated into the stability and performance analysis of encrypted control systems. For the bootstrapping polynomial in the core of the bootstrapping operation, we derived an appropriate error description. Using robust control theory, we were able to derive a stability and performance test.
By finding a lifted system description, which considers the error only at times when bootstrapping is done, a second, less conservative test was introduced.

This is the first paper to explicitly incorporate the bootstrapping effects into system analysis. This joint analysis offers several new possibilities.
If the controller is robust enough to allow for large bootstrapping errors, a less precise bootstrapping polynomial is required. In this case, the degree of the bootstrapping polynomial can be reduced, offering less precision but a computationally less complex encrypted evaluation. This can lead to quicker bootstrapping, which has been the main restriction of applying bootstrapping in practice, so far. 
Moreover, for better performance of the control system, it can be beneficial to extend the time between two bootstrapping operations. This would lead to a less frequent introduction of bootstrapping errors.
 Whether this is possible in practice has to be investigated from a cryptographical point of view.

\bibliographystyle{IEEEtran} %
\bibliography{BibForCDC24} %

\end{document}